\documentclass[letterpaper, 10 pt, conference,onecolumn]{article} 


\pdfminorversion=4

\usepackage{amsmath}
\usepackage{amsthm}
\usepackage{color}
\usepackage{algorithm}
\usepackage{tikz}
\usetikzlibrary{arrows}
\usepackage{graphicx}
\usepackage{authblk}
\usepackage{mathtools}
\usepackage{tikz}
\usetikzlibrary{arrows,automata}
\usepackage{amsfonts}
\usepackage[noend]{algpseudocode}
\usepackage[noadjust]{cite}
\usepackage{multirow}
\usepackage{balance}
\usepackage{url}

\title{Sampling-based Polytopic Trees for Approximate Optimal Control of Piecewise Affine Systems}
\author{Sadra Sadraddini and Russ Tedrake \thanks{The authors are with the Computer Science and Artificial Intelligence Laboratory, Massachusetts Institute of Technology, 32 Vassar st, Cambridge, MA 02139,
\texttt{\{sadra,russt\}@mit.edu}. This work was partially supported by ONR MURI 81825-10911 and the MIT Lincoln Laboratory. 
}}


\newtheorem{definition}{Definition}
\newtheorem{example}{Example}
\newtheorem{problem}{Problem}
\newtheorem{subproblem}{Subproblem}
\newtheorem{lemma}{Lemma}
\newtheorem{remark}{Remark}

\newtheorem{theorem}{Theorem}

\DeclareMathOperator{\convexhull}{{Convh}}
\DeclareMathOperator{\argmin}{{\arg \min}}

\DeclareMathOperator{\sample}{{\texttt{sample}}}
\DeclareMathOperator{\st}{{such~ that~}}

\DeclareMathOperator{\child}{{child}}
\DeclareMathOperator{\tree}{{tree}}
\DeclareMathOperator{\vol}{{Volume}}

\DeclareMathOperator{\bigM}{{M}}

\date{}

\begin{document}
\maketitle

\thispagestyle{empty}
\pagestyle{empty}

\begin{abstract}
Piecewise affine (PWA) systems are widely used to model highly nonlinear behaviors such as contact dynamics in robot locomotion and manipulation. Existing control techniques for PWA systems have computational drawbacks, both in offline design and online implementation. In this paper, we introduce a method to obtain feedback control policies and a corresponding  set of admissible initial conditions for discrete-time PWA systems such that all the closed-loop trajectories reach a goal polytope, while a cost function is optimized. The idea is conceptually similar to LQR-trees \cite{tedrake2010lqr}, which consists of 3 steps: (1) open-loop trajectory optimization, (2) feedback control for computation of ``funnels" of states around trajectories, and (3) repeating (1) and (2) in a way that the funnels are grown backward from the goal in a tree fashion and fill the state-space as much as possible. We show PWA dynamics can be exploited to combine step (1) and (2) into a single step that is tackled using mixed-integer convex programming, which makes the method suitable for dealing with hard constraints. Illustrative examples on contact-based dynamics are presented. 
\end{abstract}

\section{Introduction}
Many interesting behaviors in robotics are captured by highly nonlinear models. A prominent class is control through multiple contacts, where the robot must make and break contact with the environment in order to achieve its objectives. Examples include walking \cite{collins2005bipedal,grizzle2014models,deits2014footstep}, running \cite{grizzle2009mabel}, dexterous manipulation \cite{okamura2000overview}, and push-recovery \cite{pratt2006capture}. A popular approach to characterize contact-based dynamics is using piecewise affine (PWA) models. While accurate robot models are fully nonlinear, PWA models are reasonable approximations in the neighborhoods of nominal trajectories/points, just as linear approximations are, with the extra benefit of capturing contact dynamics \cite{valenzuela2016mixed,marcucci2017approximate}. PWA models are also popular in traffic networks \cite{mehr2017stochastic} and gene circuits \cite{de2004qualitative}.

Controlling PWA systems is difficult since the controller has to determine both the temporal order of modes and the inputs applied at each one. Completeness is important - not finding a solution when one exists is undesirable. Given an initial condition and a goal,  the (optimal) trajectory that steers the state to the goal while respecting the dynamics and state/control constraints can be obtained using mixed-integer convex programming (MICP). Time is discretized to obtain a finite number of decision variables. Given a discrete-time PWA model and a fixed time horizon (steps required to get into the goal), MICP approaches are sound and complete - they find optimal solutions if they exist. However, they come at a large computational price. Their unreliable computation time, even for obtaining a feasible solution instead of the optimal one, hinders online implementation for robotic tasks with fast dynamics. While there is a great deal of research on improving the runtime of MICP solvers, they are still orders of magnitude too slow for most robot control problems.  

An alternative is to move much of the computational burden to offline phase so the real-time implantation is a lookup table of simple control laws. However, existing approaches are not efficient even for relatively small systems. (Non-deterministic) finite-state abstractions \cite{Yordanov2012} require state/control discretization, which scales poorly in high dimensions. Synthesizing PWA control laws corresponding to stabilizing piecewise quadratic (PWQ) Lyapunov functions was studied in \cite{hassibi1998quadratic,han2017feedback}. But the state-control partition producing the control laws was assumed to be the same as those of PWA dynamics, which is very restrictive and the method may fail while other solutions exist \cite{han2017feedback}.

Multi-parametric programming \cite{dua2000algorithm} for model predictive control (MPC) results in explicit hybrid MPC schemes that also provide PWA control laws \cite{bemporad2000piecewise}. This approach is complete if the horizon is fixed. But it does not scale beyond very simple problems. A heuristic is to identify useful mode sequences in advance to fill the gap between explicit and full-blown online hybrid MPC \cite{hogan2016feedback,hogan2017reactive,marcucci2017approximate}. 
However, these approaches still suffer from the computational complexity caused by the number of integers - for those initial conditions that require long horizons, computations become prohibitive. 

Since the problem is of reachability class, it is amenable to {anytime algorithms} in sampling-based motion planning \cite{karaman2011sampling} such as rapidly exploring random trees (RRTs) and its variants \cite{lavalle2006planning}. RRTs for hybrid systems \cite{branicky2006sampling} provide little robustness understanding as the nodes in the tree correspond to points in the state-space. Therefore, there is no formal guarantee that a trajectory that deviates from the points is able to recover and achieve the goal. Moreover, PWA constraints pose a challenge for choosing an appropriate metric in exploring the state-space. The authors in \cite{tedrake2010lqr} used sums-of-square (SOS) programming to obtain regions of attractions (funnels) for (time-varying) linear-quadratic regulators (LQR) that stabilize goal-reaching trajectories, which are obtained using nonlinear optimization in advance. These regions are grown backward from the goal in a tree fashion. This technique is called \emph{LQR-trees}
and was originally introduced for smooth continuous-time systems, and also was applied to limit cycle stabilization of hybrid systems in \cite{rajasekaran2017lqr}. Transverse dynamics was studied to deal with switching surfaces \cite{manchester2011regions}. Nonlinear optimization of trajectories and funnels for LQR-trees becomes complicated for systems with many modes and state/input constraints. We desire an approach that exploits the properties of PWA systems and mitigates all the mentioned concerns.

In this paper, we propose a method that uses ideas both in hybrid MPC and funnels in LQR-trees. Our technique can be viewed as a discrete-time PWA version of LQR-trees. The main contributions of this paper are i) a framework for fusing trajectory optimization with the computation of polytopes of admissible states around them into a single MICP problem;  ii) sampling-based approach (similar to RRT/RRT* \cite{karaman2011sampling}) to grow a tree of polytopes backward from the goal such that the union of polytopes cover the state-space as much as possible. Once the tree of polytopes is computed, online implementation requires few matrix multiplications or small convex programs. We also obtain a cost-to-go function that over-approximates the optimal one. Similar to LQR-trees, we obtain probabilistic feedback coverage: as the number of samples go to infinity, the union of polytopes cover the whole region of admissible initial conditions with probability one. 

This paper is organized as follows. The problem is stated in Sec. \ref{sec_problem}. Technical details on polytopes computation and tree construction are provided in Sec. \ref{sec_compute} and Sec. \ref{sec_tree}, respectively. Implementation details are discussed in Sec. \ref{sec_control}. Examples are presented in Sec. \ref{sec_case}.

\section{Problem Formulation and Approach}
\label{sec_problem}
\subsubsection*{Notation}
\label{sec_prelim}

The set of real, non-negative real, integer numbers, and empty set are denoted by $\mathbb{R}$, $\mathbb{R}_+$, $\mathbb{N}$, and $\emptyset$, respectively. 
Given $\mathbb{S} \subset \mathbb{R}^n$ and $A \subset \mathbb{R}^{n_A \times n}$, we interpret $A\mathbb{S}$ as $\{As| s \in \mathbb{S}\}$. Set additions are interpreted in Minkowski sense. The vector of all ones is denoted by $\underbar{1}$, where the dimension is unambiguously interpretable from the context.    
A \emph{polyhedron} $\mathbb{H} \subset \mathbb{R}^n$ has the form $\mathbb{H}=\{ x \in \mathbb{R}^n | H x \le h\}$, where $H \in \mathbb{R}^{n_H \times n}, h \in \mathbb{R}^{n_H}$. All inequality relations are interpreted element-wise. A bounded polyhedron is called a \emph{polytope}.

We study discrete-time systems of the form 
\begin{equation}
\label{eq_system}
\begin{array}{c}
x_{t+1} = F(x_t,u_t), \\
\end{array}
\end{equation}
where $x_t \in \mathbb{X}, \mathbb{X} \subset \mathbb{R}^n$, is the state at time $t$, $u_t \in \mathbb{U}, \mathbb{U} \subset \mathbb{R}^m$, is the control input at time $t$, $t \in \mathbb{N}$, and $F: \mathbb{X} \times \mathbb{U} \rightarrow \mathbb{X}$ is a PWA function given as:
\begin{equation}
\label{eq_pwa}
F(x,u) =  A_i x + B_i u + c_i , (x,u) \in \mathbb{H}_i,
\end{equation}
where $n_\mathcal{M} \in \mathbb{N}$ is the number of modes, $\mathbb{H}_i, i \in \mathcal{M}, \mathcal{M}:=\{1,\cdots,n_{\mathcal{M}}\}$, construct a polytopic partition of $\mathbb{X} \times \mathbb{U}$, and $A_i \in \mathbb{R}^{n\times n}, B_i \in \mathbb{R}^{n \times m}$ and $c_i \in \mathbb{R}^n$ are constant matrices, representing affine dynamics in mode $i$. 

\begin{problem}
\label{problem_feasible}
Given a PWA system \eqref{eq_pwa} and a goal polytope $\mathbb{X}_{\text{Goal}} \subset \mathbb{X}$, find the largest set of initial conditions $\mathbb{X}_{\text{initial}} \subseteq \mathbb{X}$ and a control policy $\pi: \mathbb{X} \rightarrow \mathbb{U}$ such that all the points in $\mathbb{X}_{\text{initial}}$ are steered into $\mathbb{X}_{\text{Goal}}$ in finite time. Moreover, if multiple strategies are available, select the one that
\begin{equation}
\label{eq_cost}
\text{minimize } J:= \sum_{t=0}^{T_f} \gamma_i(x_t,u_t), 
\end{equation}
where $\gamma_i: \mathbb{H}_i \rightarrow \mathbb{R}, i \in \mathcal{M}$, $x_0$ is the initial state, $(x_t,u_t) \in \mathbb{H}_i$, $u_t=\pi(x_t), t=0,\cdots,T_f-1$, $T_f \in \mathbb{N}$, and $x_{T_f} \in \mathbb{X}_{\text{Goal}}$. 
\end{problem}
Our framework is able to accommodate a finite union of polytopes as the goal, but we stick to single polytope for brevity. The solution to Problem \ref{problem_feasible} consists of both the control policy $\pi$ and the set of admissible initial conditions $\mathbb{X}_{\text{initial}}$. Finding representations for both is difficult. However, given $x \in \mathbb{X}$ and $T \in \mathbb{N}$, the following MICP problem:
\begin{equation}
\label{eq_MPC}
\begin{array}{ll}
\min & \displaystyle \sum_{t=0}^{T} \gamma_i(x_\tau, u_\tau) \\
 \text{s.t} & x_{\tau+1}=F(x_\tau,u_\tau), (x_\tau,u_\tau) \in \mathbb{H}_i, \\
&  \tau=0,1,\cdots,T-1, x_0=x, x_T \in \mathbb{X}_{\text{goal}},
\end{array}
\end{equation}
yields the optimal control sequence $u^*_0,\cdots,u^*_{T-1}$, which is an open-loop plan. We have $x \in \mathbb{X}_{\text{initial}}$ if and only if \eqref{eq_MPC} is feasible for some $T \in \mathbb{N}$. Optimality is more subtle, as one has to check the solutions for all $T \in \mathbb{N}$ and pick the best. For some problems the optimal solutions have the smallest $T$ - an obvious example is time-optimality where $\gamma_i=1, \forall i \in \mathcal{M}$. 

Solving \eqref{eq_MPC} online is effectively a closed-loop policy, but as stated earlier, it is often too slow. We desire faster feedback laws. We develop an anytime algorithm that incrementally builds a set of initial conditions that asymptotically reaches $\mathbb{X}_{\text{initial}}$, but sacrifices strong claims on optimality. 

\section{Polytopic Trajectories}
\label{sec_compute}

In this section, we introduce the first part of our solution to Problem \ref{problem_feasible}. We propose a method for obtaining trajectories of polytopes to a set of target polytopes. We focus on solving the following subproblem throughout this section.  

\begin{subproblem}
\label{subproblem}
Given a finite number of polytopic targets $\mathbb{X}_{i,\text{target}}, i=1,\cdots,N,$ and $T \in \mathbb{N}$, find a sequence of polytopes $\mathbb{X}_0,\mathbb{X}_1,\cdots,\mathbb{X}_T,$ such that:
i) (polytope-to-polytope flow) for all $x \in \mathbb{X}_\tau, 0 \le \tau < T$, there exists $u \in \mathbb{U}$ such that $F(x,u) \in \mathbb{X}_{\tau+1}$ and $(x,u) \in \mathbb{H}_i$ for some $i \in \mathcal{M}$;
ii) (target constraint) $\exists i \in \{1,\cdots,N\}, \mathbb{X}_T \subseteq \mathbb{X}_{i,\text{target}}$.
\end{subproblem}
A special case is when the target is a single polytope, but we formulate the general case of multiple polytopes as it turns to be useful in Sec. \ref{sec_tree}. Note that Subproblem \ref{subproblem} often does not have a unique solution. We will add cost criteria later in the paper mainly in order to obtain ``large" polytopes. 

\subsection{Parameterization}
Here is the main technical idea of this paper. We characterize polytopes by affine transformations of a {pre-defined} polytope $\mathbb{P} \subset \mathbb{R}^{n_p}$, where $\mathbb{P}:=\{ x \in \mathbb{R}^{n_p} | P x \le \underbar{1} \}$. The user has to choose $\mathbb{P}$. For example, the unit cube with $n_p=n$ is a simple option; we highlight its advantages later in the paper. The polytope of possible states at time $t$ is given by:
\begin{equation}
\mathbb{X}_t = \{\bar{x}_t\} \oplus G_t \mathbb{P},
\end{equation}
where $\bar{x}_t \in \mathbb{R}^n$ and $G_t \in \mathbb{R}^{n \times n_p}$ are parameters that we search over.  Given $x \in \mathbb{X}_t$, one can compute $p(x) \in \mathbb{P}$ such that $x= \bar{x}_t + G_t p(x)$. Note that $p(x)$ may not be unique. A special case is when $n=n_p$ and $G_t$ is an invertible matrix, so $p(x)=G^{-1} (x-\bar{x}_t)$. Otherwise, given $\bar{x}_t$ and $G_t$, $p(x)$ is determined from a linear program  (with zero or some ad-hoc cost). We propose the following control law:
\begin{equation}
\label{eq_control_law}
u_t(x)=\bar{u}_t + \theta_t p(x),
\end{equation}
where $\bar{u}_t \in \mathbb{R}^m$ and $\theta_t \in \mathbb{R}^{m \times n_p}$ are parameters that, again, we search over. 
The set of all possible control inputs induced by \eqref{eq_control_law} at time $t$ is $\mathbb{U}_t:=\{ \bar{u}_t \} + \theta_t \mathbb{P}.$

\subsection{Mixed-Integer Encoding}

\subsubsection{Trajectory}

Let $i \in \mathcal{M}$ be such that $\mathbb{X}_t \times \mathbb{U}_t \subseteq \mathbb{H}_i$. In other words, we restrict that the product of each state/control polytope to lie in a single mode of the PWA system - the constraint ensuring this is discussed shortly. Using \eqref{eq_control_law}, we arrive at the following evolution for $\mathbb{X}_t$ if the mode is $i$:
\begin{subequations}
\label{eq_evolve}
\begin{equation}
\bar{x}_{t+1} = A_i \bar{x}_t + B_i \bar{u}_t + c_i,
\end{equation}
\begin{equation}
G_{t+1} = A_i G_t + B_i \theta_t.
\end{equation}
\end{subequations}
We encode \eqref{eq_evolve} using \emph{big-M method}, which is a standard procedure for translating PWA systems into mixed-integer constraints. We introduce binary variables $\delta^i_t \in \{0,1\}$, which are used in a way that $\delta_t^i$ takes $1$ if mode at time $t$ is $i$, and zero otherwise. Thus, we have the constraint:
\begin{equation}
\label{eq_sum_delta}
\Sigma_{i \in \mathcal{M}} \delta^i_t=1, t=0,\cdots,T.
\end{equation}
Eq. \eqref{eq_evolve} is encoded as follows. For all $i \in \mathcal{M}$, we have:
\begin{subequations}
\label{eq_evolve_mip}
\begin{equation}
- \bigM (1-\delta^i_t) \le \bar{x}_{t+1} - A_i \bar{x}_t - B_i \bar{u}_t - c_i \le \bigM (1-\delta^i_t),
\end{equation}
\begin{equation}
 - \bigM (1-\delta^i_t) \le G_{t+1} - A_i G_t - B_i \theta_t \le  \bigM (1-\delta^i_t),
\end{equation}
\end{subequations}
where $M$ is a sufficiently large positive number. We omit detailed discussions on big-M encoding, as they are thoroughly studied in the literature, see, e.g., \cite{Bemporad1999}.  

\subsubsection{Subset Maintenance}
In order to ensure that \eqref{eq_evolve} is sound, we need to enforce that for some $i \in \mathcal{M}$, $\mathbb{X}_t \times \mathbb{U}_t \subseteq \mathbb{H}_i$. This becomes a set of mixed-integer linear constraints due to the following lemma.

\begin{lemma}
\label{lemma_farkas}
Given a polytope $\mathbb{Y}=\{y \in \mathbb{R}^m | H_y y \le h_y\}$, $Q \in \mathbb{R}^{n \times m}$, $\bar{q} \in \mathbb{R}^m$, and polytopes $\mathbb{Z}_i=\{z \in \mathbb{R}^m | H_{z,i} z \le h_{z,i} \}, i=1,\cdots, N$, the condition $\exists i \in \{1,\cdots,N\} $ such that $ Q\mathbb{Y}+\bar{q} \subseteq \mathbb{Z}_i$ is equivalent to 
\begin{equation}
\label{eq_farkas}
\begin{array}{c}
\Lambda_i H_y = H_{z,i} Q_i, \Lambda_i h_y \le h_{z,i} \delta_i - H_y \bar{q}_i, \Lambda_i \ge 0, \\
\displaystyle
\delta_i \in \{0,1\}, \sum_{i=1}^N \delta_i =1, \sum_{i=1}^N \bar{q}_i =\bar{q}, \sum_{i=1}^N Q_i =Q.
\end{array}
\end{equation} 
\end{lemma}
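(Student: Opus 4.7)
The plan is to prove the lemma in two stages: first the base case $N=1$, and then lifting it to the disjunction over $N$ targets using the auxiliary split variables $Q_i, \bar{q}_i$ together with the binaries $\delta_i$.

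For the base case, I would show that $Q\mathbb{Y} + \bar{q} \subseteq \mathbb{Z}$ is equivalent to the existence of a nonnegative matrix $\Lambda$ with $\Lambda H_y = H_z Q$ and $\Lambda h_y \le h_z - H_z \bar{q}$. Row by row, the containment reads $H_z^{(k)}(Qy + \bar{q}) \le h_z^{(k)}$ for every $y \in \mathbb{Y}$, i.e., the LP maximum of the linear functional $y \mapsto H_z^{(k)} Q y$ over $H_y y \le h_y$ is at most $h_z^{(k)} - H_z^{(k)}\bar{q}$. LP duality (equivalently Farkas' lemma) furnishes a nonnegative row multiplier $\lambda^\top$ satisfying $\lambda^\top H_y = H_z^{(k)} Q$ and $\lambda^\top h_y \le h_z^{(k)} - H_z^{(k)}\bar{q}$, and stacking these rows gives the desired $\Lambda$. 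This step is standard and should require only careful bookkeeping.

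For the disjunction I would introduce $Q_i, \bar{q}_i, \Lambda_i, \delta_i \in \{0,1\}$ with $\sum_i \delta_i = 1$, $\sum_i Q_i = Q$, $\sum_i \bar{q}_i = \bar{q}$. The forward direction is immediate: given $Q\mathbb{Y}+\bar{q} \subseteq \mathbb{Z}_k$, set $\delta_k=1,\ Q_k = Q,\ \bar{q}_k = \bar{q}$ and take $\Lambda_k$ from the base case, while for $j\neq k$ set $\delta_j = 0,\ Q_j = 0,\ \bar{q}_j = 0,\ \Lambda_j = 0$, which trivially satisfy all the constraints.

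The hard part is the reverse direction, where I must rule out that the inactive branches can smuggle a nontrivial contribution into the sums. If $\delta_k = 1$ and $\delta_j = 0$ for $j\neq k$, the inactive constraints reduce to $\Lambda_j H_y = H_{z,j} Q_j$ and $\Lambda_j h_y \le -H_{z,j}\bar{q}_j$, which by the base case certify $Q_j \mathbb{Y} + \bar{q}_j \subseteq \{z : H_{z,j} z \le 0\}$, the recession cone of $\mathbb{Z}_j$. Since each $\mathbb{Z}_j$ is a polytope, hence bounded, this cone is $\{0\}$, so $Q_j y + \bar{q}_j \equiv 0$ on $\mathbb{Y}$. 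The active constraints at $k$ give $Q_k \mathbb{Y} + \bar{q}_k \subseteq \mathbb{Z}_k$, and pointwise summation yields $Qy + \bar{q} = Q_k y + \bar{q}_k \in \mathbb{Z}_k$ for every $y \in \mathbb{Y}$, which is the claimed containment. In the final write-up I would also flag the apparent typos in the statement (the trailing $H_y$ in $\Lambda_i h_y \le h_{z,i}\delta_i - H_y \bar{q}_i$ should be $H_{z,i}$ on dimensional grounds, and $\bar{q}, \mathbb{Z}_i$ live in $\mathbb{R}^n$ rather than $\mathbb{R}^m$), so that the encoding is unambiguous.
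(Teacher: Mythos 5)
Your proposal is correct and follows essentially the same route as the paper: the base case $N=1$ via LP duality (the paper's auxiliary Lemma 2), followed by the disjunctive lift in which the active branch certifies containment and the inactive branches are shown to contribute nothing. Your recession-cone argument for the inactive branches (that $\Lambda_j H_y = H_{z,j}Q_j$, $\Lambda_j h_y \le -H_{z,j}\bar q_j$ certifies $Q_j\mathbb{Y}+\bar q_j \subseteq \{z : H_{z,j}z \le 0\} = \{0\}$ by boundedness of $\mathbb{Z}_j$, after which pointwise summation suffices) is in fact a cleaner justification than the paper's assertion that $\Lambda_j=0,\ \bar q_j=0,\ Q_j=0$ is the only feasible choice, and your flagged typos ($H_y$ versus $H_{z,i}$ in the right-hand side, and the dimension of $\bar q$ and $\mathbb{Z}_i$) are genuine.
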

The proof is in the extended version in \cite{link_github}. In case of $N=1$, Lemma \ref{lemma_farkas} is reduced to an extension of Farkas' Lemma in \cite{rakovic2007optimized}. It also can be shown that if binary variables in \eqref{eq_farkas} are relaxed to continuous variables in $[0,1]$, then \eqref{eq_farkas} is equivalent to $Q\mathbb{Y}+\bar{q} \subseteq \text{Convexhull}(\{\mathbb{Z}_i\}_{i=1,\cdots,N})$, which corresponds to the tightest binary relaxation. 
We use Lemma \ref{lemma_farkas} to encode both the target constraints and also ensuring the fact that $\exists i \in \mathcal{M} \st \mathbb{X}_t \times \mathbb{U}_t \subseteq \mathbb{H}_i$. 


The solution to Subproblem \ref{subproblem} involves numerical values for $$
\zeta:=\{\bar{x}_\tau,\bar{u}_\tau,G_\tau,\theta_\tau\}_{\tau=0,1,\cdots,T},
$$ which are obtained by solving the following optimization problem:
\begin{equation}
\label{volume}
\begin{array}{lll}
\zeta^* = & \argmin & \alpha(\zeta) \\
& \text{s.t} & \exists i \in \{1,\cdots,N\}, \mathbb{X}_T \subseteq \mathbb{X}_{i,\text{target}},\\
&  & \eqref{eq_sum_delta},\eqref{eq_evolve_mip}, \exists i \in \mathcal{M}, \mathbb{X}_i \times \mathbb{U}_i \subseteq \mathbb{H}_{i},
\end{array}
\end{equation}
where $\alpha$ is a cost function with appropriate domain. Note that if a trajectory of points $(\bar{x}_0,\bar{u}_0),\cdots, (\bar{x}_T,\bar{u}_T)$, is feasible, a trivial solution is $G_\tau=0,\theta_\tau=0, \tau=0,\cdots, T$, i.e., singletons instead of full-dimensional polytopes. We address this issue by introducing heuristics for $\alpha$ to obtain large, preferably full-dimensional, polytopes.   

\subsection{Volume Maximization}
\label{sec_volume}
We desire polytopes $\mathbb{X}_0,\cdots,\mathbb{X}_T$ that are large in the sense they cover the state-space as much as possible. The ideal is maximizing the volume of the union of polytopes, which is a nonlinear objective. Note that polytopes may overlap. When $n_p=n$, a simple heuristic is to maximize the trace of $G_0$ - a linear objective. If $\forall p \in \mathbb{P}$, $-p \in \mathbb{P}$ (symmetric set), then multiplying any column of $G_0$ by $-1$ does not change $\mathbb{X}_0$. Therefore, we can safely assume that all the diagonal terms of $G_0$ are positive without any restriction posed on the space of solutions. A drawback of trace maximization is that it may still lead to zero volume. Additionally, it often leads to sparse solutions in a way that few diagonal terms take large values, but others become zero. We found that including the following heuristics is useful in \eqref{volume} when $\mathbb{P}=[-1,1]^n$:
First, we include a weighted combination of $l_1$ and $l_\infty$ norms of the vector of diagonal terms of $G_0$ in $\alpha(\zeta)$. Second, by restricting $G_0$ to be an upper/lower triangular matrix and constraining all the diagonal terms to be greater than a small positive, non-zero volumes for $\mathbb{X}_0$ are guaranteed. Third, we include weighted summation of traces of $G_1,\cdots,G_T$ in $\alpha(\zeta)$. This promotes larger polytopes for $\mathbb{X}_1,\cdots,\mathbb{X}_T$. But there is no guarantee for this heuristic to prove useful as dynamical constraints dominate the relation between subsequent polytopes. Finally, if very small volume is reached, we reject the solution and resolve the optimization problem with a different objective. The mentioned heuristics make \eqref{volume} a mixed-integer linear program (MILP) problem. 


\begin{remark}
Maximizing the determinant of a square matrix subject to linear constraints can be cast as a semidefinite program (SDP) \cite{vandenberghe1998determinant}, which may prove useful for our application. While we leave investigation of this approach to future work, we note that it converts \eqref{volume} to mixed-integer semidefinite programming (MISDP), for which solvers are still not as mature as MILP/MIQP solvers. 
\end{remark}

\subsubsection*{Complexity}
The complexity of MICP solvers grow exponentially with number of integers involved, in general. The method introduced in this section introduces $Tn_{\mathcal{M}}$ binary variables, which is the same as \eqref{eq_MPC}. However, the price is introducing more continuous variables and constraints due to Lemma \ref{lemma_farkas}. The reward is obtaining a polytopic  family of trajectories and a set of admissible initial conditions. 



\section{Random Trees of Polytopes}
\label{sec_tree}

In this section, we provide the solution to Problem \ref{problem_feasible} by solving Subproblem \ref{subproblem} multiple times such that a tree is grown backward from the goal. The tree is formalized in Sec. \ref{sec_tree_structure}. The technique is conceptually similar to the procedure in RRT and LQR-trees, which is detailed in Sec. \ref{sec_grow}. 

\subsection{Tree Structure}
\label{sec_tree_structure}

\begin{definition}
\label{def_tree}
A directed, rooted, labeled tree (which we simply refer to as \emph{tree} in the rest of the paper) is a tuple $\mathcal{T}=(\mathcal{V},v_0,\child,\mathcal{C})$, where $\mathcal{V}$ is the set of nodes, $v_0 \in \mathcal{V}$ is the root node, $\child: \mathcal{V}\setminus \{v_0\} \rightarrow  \mathcal{V}$ is a function that maps each node to its unique successor (child) in the graph - the root does not have a child, and $\mathcal{C}: \mathcal{V} \rightarrow  \mathbb{R}$ is a cost function that maps each node to a real value. 
\end{definition}
For all nodes $v \in \mathcal{V}\setminus \{v_0\}$, a unique path toward the root exists: $$v,\child(v),\child(\child(v)),\cdots,v_0.$$  
The correspondence of the elements in Definition \ref{def_tree} with the solution to Problem \ref{problem_feasible} are as follows. Each node is a polytope in $\mathbb{X}$. We denote polytope corresponding to $v \in \mathcal{V}$ by $\mathbb{X}_v$.  The root node is the goal polytope: $\mathbb{X}_{v_0}= \mathbb{X}_{\text{goal}}$. Note that for $\mathbb{X}_{\child(v)}$, $v \in \mathcal{V}\setminus \{v_0\}$, we have already computed a control policy $\mu_v: \mathbb{X}_v \rightarrow \mathbb{U}$ such that $\left\{(x,\mu_v(x)) | x \in \mathbb{X}_v \right \} \subseteq \mathbb{H}_i$ for some $i\in \mathcal{M}$, and $\left \{A_i x +B_i \mu_v(x) + c_i| x \in \mathbb{X}_v \right\} \subseteq \mathbb{X}_{\child(v)}$. We define $\mathcal{C}(\mathbb{X}_v)$ as the worst-case cost induced by moving from a point in $\mathbb{X}_v$ to $\mathbb{X}_{\child(v)}$: $\mathcal{C}(v)= \max_{x \in \mathbb{X}_v} c_i(x,\mu(x))$. Using control law \eqref{eq_control_law}, it becomes:
\begin{equation}
\label{eq_W_convex}
\mathcal{C}(\mathbb{X}_v) = \max_{p \in \mathbb{P}} \gamma_i(\bar{x}_v+G_v p,\bar{u}_{v}+\theta_{v} p),
\end{equation}
where $\mathbb{X}_v=\{\bar{x}_v\}+G_v \mathbb{P}$, $G_{\child(v)}=A_i G_v+ B_i \theta_{v}$, and $\bar{x}_{\child(v)}=A_i\bar{x}_v+B_i\bar{u}_{v}+c_i$. We assume $\gamma_i$'s are given such that \eqref{eq_W_convex} is tractable. Note that  over-approximations of values in $\mathcal{C}$ are still valid for statements made in Sec. \ref{sec_control}. 
Once the tree is available, we recursively construct the cost-to-go (value) function $V: \mathcal{V} \rightarrow \mathbb{R}$ as:
\begin{equation}
\label{eq_value}
V(v)=\mathcal{C}(v)+V(\child(v)),
\end{equation}
where $v \in \mathcal{V} \setminus \{v_0\}$, and $V(v_0)=0$. As explained in Sec. \ref{sec_control}, the cost-to-go of nodes provides an upper-bound for the cost-to-go of states, and is a key component of the controller. The set of all states in the tree is given as $
\mathbb{X}_{\tree}= \bigcup_{v \in \mathcal{V}} \mathbb{X}_v.
$

\subsection{Growing the tree}
\label{sec_grow} 
\subsubsection*{Distance From a Polytope} First, we require a routine that computes the distance between $x \in \mathbb{X}$ and $\mathbb{X}_v=\{\bar{x}_v\}+G_v \mathbb{P}$ - it should be zero if and only if $x \in \mathbb{X}_v$. A natural candidate is the following optimization problem:
\begin{equation}
\begin{array}{lll}
d(x,\mathbb{X}_v)=& \min & \|\delta\|_{l} \\
& \text{s.t} & x+\delta=\bar{x}_v+G_v p, p \in \mathbb{P},
\end{array}
\label{eq_distance}
\end{equation} 
where $l \in \{1,2,\infty\}$ makes \eqref{eq_distance} a LP/QP. However, we desire closed-form expressions since we  implement the distance function many times both in tree construction (offline) and controller implementation (online). By assuming $\mathbb{P}=[-1,1]^n$, all full-dimensional polytopes become paralleltopes for which $x=\bar{x}_v+G_v p, p \in \mathbb{P},$ is equivalent to $\|G_v^{-1}(x-\bar{x}_v)\|_\infty \le 1$. Despite heuristics for obtaining full-dimensional polytopes, it is still possible that some polytopes have zero, or close to zero, volume. We circumvent this numerical issue by adding $\epsilon$, a small number, to the singular values of $G_v$ that are smaller than $\epsilon$, to obtain $G_v^\epsilon$. 
Define $$\mathbb{X}_v^\epsilon=\{\bar{x}_v\}+G_v \mathbb{P}.$$ It is easy to verify that $\mathbb{X}^\epsilon_v \subseteq  \mathbb{X}_v + \epsilon \mathbb{P}$. We  Introduce $$p_v(x):={G_v^\epsilon}^{-1} (x-\bar{x}_v).$$ Notice that $x \in \mathbb{X}^\epsilon_v \Leftrightarrow p_v \in \mathbb{P}$. Let $p^*_v(x):=\min\left(1,\max(p_v(x),-1)\right)$ - $\min$ and $\max$ operations are implemented row-wise - and 
\begin{equation}
\label{eq_distance_alpha}
d_v(x):=\lim_{\epsilon \rightarrow 0}\left\|G_v^\epsilon(p_v(x)-p^*_v(x))\right\|_\infty,
\end{equation}
which is basically the $l_\infty$ distance between $x$ and the point in $\mathbb{X}_v^\epsilon$ for which their $l_\infty$ distance is minimal in the space transformed by ${G_v^\epsilon}^{-1}$ (which inherits metric properties as it is a one-to-one transformation). The main advantage of the unconventional distance \eqref{eq_distance_alpha} is that it tends to cancel out the effect of $\epsilon$, and more importantly, it can be cast as elementary matrix operations (multiplications and row-wise $\min, \max$). Thus, it is efficiently implementable for a large number of polytopes in parallel by stacking \eqref{eq_distance_alpha} for all $v \in \mathcal{V}$. In case of other choices of $\mathbb{P}$, one may need to stick to \eqref{eq_distance}, which may be too slow for some applications.    

 
\subsubsection*{Sample and Rejection} We assume we are given a routine \texttt{sample} which randomly selects points from $\mathbb{X}$ with total support, and is repeated in an independent and identically distributed (i.i.d.) fashion. For sampling from polytopes, we used hit and run polytopic sampler in \cite{mete2012pattern}. We select points from $\mathbb{X} \setminus \mathbb{X}_{\tree}$ by rejecting samples in $\mathbb{X}_{\tree}$. Note that $x \in \mathbb{X} \setminus \mathbb{X}_{\tree} $ iff $\{d_v(x)=0| v \in \mathcal{V}\} = \emptyset$. 

\subsubsection{Tree initialization}
We initially solve Subproblem \ref{subproblem} with no constraints for $\bar{x}_0$ and focus on obtaining large volumes for polytopes. There is a trade-off in choosing $T$. If $T$ is small, then computations are faster, but we obtain fewer polytopes. Larger $T$ leads to larger problem size, and often (but not necessarily) finds larger polytopes. Once a solution to Subproblem \ref{subproblem} is obtained, we initialize the tree by adding nodes $v_0,\cdots,v_{T-1}$ corresponding to $\mathbb{X}_0,\cdots,\mathbb{X}_{T-1}$, and set $\child(v_{\tau+1})=v_{\tau}, \tau=0,\cdots,T-1$. The cost and value functions are constructed unambiguously. 

\subsubsection{Tree Growth}
The procedure is outlined in Algorithm \ref{alg_vanilla}. 
A heuristic that is useful and is essential in ensuring probabilistic coverage in Theorem. \ref{theorem_coverage} is $\bar{x}_0=x_{\sample}+\eta$ in line 5:, where $\eta_{k} \in \beta {\eta_{\max}}_{k} \mathbb{P}$ (subscript $k$ stands for $k$'th cartesian direction), and ${\eta_{\max}}_{k}$ a positive number such that $$\mathbb{X} \subseteq \prod_{k=1}^n {\eta_{\max}}_k [-1,1].$$ This heuristic allows moving from the $x_{\sample}$ to gain feasibility/larger polytopes. We randomly select $\beta \in [0,1]$. Note that we lose the guarantee that $\mathbb{X}_0 \not \subseteq \mathbb{X}_{\tree}$. It suffices to reject solutions if the centroid and (all or some of) vertices of $\mathbb{X}^*_0$ are in $\mathbb{X}_{\tree}$, which are rapidly checked using \eqref{eq_distance_alpha}. As $\mathbb{X}_{\tree}$ gets larger, we may bias $\beta$ toward smaller values to cut the corners. Moreover, as the MILP for line 2: gets larger, it is beneficial to split the target set into  smaller polytypic clusters so we solve multiple smaller MILPs. Notice that our tree extension routine is also governed by MICP. Finally, we add nodes corresponding to the obtained polytopes to the tree $\mathcal{T}$. The child, cost and value functions are updated accordingly. The default criteria in Line 6: is $\vol(\mathbb{X}_0)>0$. 

\begin{algorithm}[t]
\caption{Random Trees of Polytopes}\label{tree}
\begin{algorithmic}[1]
\Require {Initialize $\mathcal{T}^1=(\mathcal{V},v_0,\child,\mathcal{C})$}
\For {step $2$ to \texttt{Number of iterations}} 
\State Select $x_{\sample} \in \mathbb{X} \setminus \mathbb{X}^{K-1}_{\tree},$ and $1 \le T \le T_{\max}$ 
\State Solve Subproblem \ref{subproblem} with $\bar{x}_0=x_{\sample}+\eta$ and $\mathbb{X}_{\text{target}}=\mathbb{X}_{\tree}$
\If {feasible solution exists and meets criteria} 
\State Add $v^*_{0},\cdots,v^*_{T-1}$, corresponding to $\mathbb{X}_0,\cdots,\mathbb{X}_T$, to $\mathcal{V}$, $\child(v^*_\tau)=v^*_{\tau+1}, \tau=0,\cdots,T-1$, $\child({v^*_{T-1}})=v, \mathbb{X}_T \subseteq \mathbb{X}_v$.  
\State Update cost and value functions of $\mathcal{T}^K$. 
\EndIf
\EndFor
\end{algorithmic}
\label{alg_vanilla}
\end{algorithm}
 \begin{figure}[t]
\centering
\begin{tikzpicture}
\draw[fill=red!50!] (-2, 1.5) -- (-1,2) -- (-1,2.4) -- (-2,2.4) -- (-2, 1.5) ;
\draw[fill=red!50!] (-0.5, 1.7) -- (0.1,2.6) -- (0.9,2.2) -- (0.8,1.8) -- (-0.5, 1.7) ;
\draw[fill=red!50!] (-1, 1) -- (-0.5,1.5) -- (0.5,1.5) -- (0,1) -- (-1, 1) ;
\draw[fill=red!50!] (-1.3, 0.2) -- (-0.6,0.7) -- (0.3,0.5) -- (-0.6,0.1) -- (-1.3, 0.2) ;
\draw[dashed,fill=blue!50!] (1.5, 0.2) -- (0.8,0.7) -- (0.5,0.5) -- (0.8,0.1) -- (1.5, 0.2) ;
\draw[dashed,fill=blue!50!] (1.8, 1) -- (1.1,1.5) -- (0.7,1.5) -- (0.9,1) -- (1.4, 1) ;
\draw[dashed,fill=blue!50!] (0.4, 1.8) -- (0.6,2.3) -- (0,2.2) -- (-0.3,1.8) -- (0.4, 1.8) ;
\draw[dotted,blue,fill=cyan!50!] (2.4, 0.3) -- (2.6,0.8) -- (2,1.2) -- (2,0.8) -- (2.4, 0.3) ;
\draw[dotted,blue,fill=cyan!50!] (1.3, 1.2) -- (1.1,1.45) -- (0.8,1.45) -- (1,1.1) -- (1.3, 1.2);
\draw[dotted,blue,fill=cyan!50!] (1.5, 0.6) -- (1.6,0.9) -- (1.8,0.9) -- (1.7,0.6) -- (1.5, 0.6);

\node[fill,circle,inner sep=1pt,minimum size=2pt] (a0) at (-0.6, 0.3) {};
\node[fill,circle,inner sep=1pt,minimum size=2pt] (a1) at (-0.3, 1.2) {};
\node[fill,circle,inner sep=1pt,minimum size=2pt] (a2) at (0.1, 2.1) {};
\node[fill,circle,inner sep=1pt,minimum size=2pt] (a3) at (-1.5, 2) {};

\node[fill,circle,inner sep=1pt,minimum size=2pt] (b0) at (0.7, 0.5) {};
\node[fill,circle,inner sep=1pt,minimum size=2pt] (b1) at (1, 1.3) {};
\node[fill,circle,inner sep=1pt,minimum size=2pt] (b2) at (0.1, 2) {};

\node[fill,circle,inner sep=1pt,minimum size=2pt] (c0) at (2.2, 0.8) {};
\node[fill,circle,inner sep=1pt,minimum size=2pt] (c1) at (1.6, 0.7) {};
\node[fill,circle,inner sep=1pt,minimum size=2pt] (c2) at (1.1, 1.3) {};

\draw[->] (a0) -- (a1) -- (a2) -- (a3);
\draw[dashed,->] (b0) -- (b1) -- (b2);
\draw[dotted,->] (c0) -- (c1) -- (c2);

\draw[thin,dashed] (2.5,1.6) -- (-0.7,1.6) -- (-1.8,0.3);
\draw[thin,dashed] (-0.7,1.6) -- (-0.8,2.3);

\node[] at  (-2,1) {$\mathbb{H}_1$};
\node[] at  (-0.5,2.5) {$\mathbb{H}_2$};
\node[] at  (0.8,-0.3) {$\mathbb{H}_3$};

\node[] at  (-2,0) {$\mathbb{X}$};
\node[] at  (2.3,0.5) {$x_{\sample}$};

\tikzset{vertex/.style = {shape=circle,draw,minimum size=1.5em}}
\tikzset{edge/.style = {->,> = latex'}}

\node[vertex,fill=red!50] (v0) at  (3,2) {$v_0$};
\node[vertex,fill=red!50] (v1) at  (5,2) {$v_1$};
\node[vertex,fill=red!50] (v2) at  (4,1) {$v_2$};
\node[vertex,fill=red!50] (v3) at  (4,0) {$v_3$};
\node[vertex,fill=blue!50] (v4) at (5,1) {$v_4$};
\node[vertex,fill=blue!50] (v5) at (5,0) {$v_5$};
\node[vertex,fill=cyan!50] (v6) at (6,0) {$v^*_0$};
\node[vertex,fill=cyan!50] (v7) at (6,1) {$v^*_1$};
\draw[edge] (v1) to (v0);
\draw[edge] (v2) to (v1);
\draw[edge] (v3) to (v2);
\draw[dashed,edge] (v5) to (v4);
\draw[dashed,edge] (v4) to (v1);
\draw[dotted,edge] (v6) to (v7);
\draw[dotted,edge] (v7) to (v4);
\end{tikzpicture}
\caption{A schematic illustration of polytopic tree and its extension routine}
\label{example_tikz}
\end{figure}
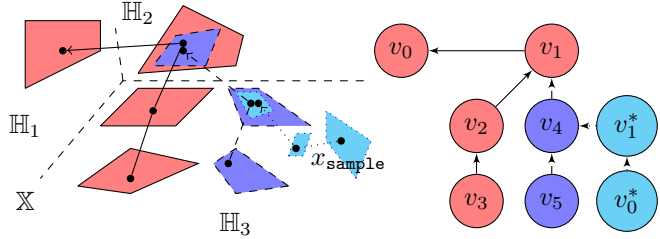

\begin{remark}
There is no optimality consideration in Algorithm \ref{alg_vanilla}, hence it may lead to unnecessarily large cost-to-go values. Inspired by the rewiring procedure in RRT* \cite{karaman2011sampling}, $\child$ function can be modified when  lower cost-to-go values for some nodes according to \eqref{eq_value} become available. Note that unlike RRT*, we do not seek asymptotic optimality guarantees since we do not resample from $\mathbb{X}_{tree}$.
\end{remark} 
\begin{theorem}
\label{theorem_coverage}
Let $\mathbb{X}_{\tree}^{K}$ be the set of states covered by the tree in the $K$'th iteration in Algorithm \ref{alg_vanilla}. Then the following property holds with probability $1$: 
\begin{equation}
\label{eq_coverage}
\lim_{K \rightarrow \infty} (\mathbb{X}_{\text{initial}} \setminus \mathbb{X}_{\tree}^{K}) = \emptyset.
\end{equation} 
\end{theorem}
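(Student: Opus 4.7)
The plan is to reduce the set coverage claim to countably many ball-coverage claims, each handled by a Borel--Cantelli argument. Specifically, I would (i) cover $\interior(\mathbb{X}_{\text{initial}})$ by a countable collection of closed balls $\{\bar B_j\}$, (ii) show that each $\bar B_j$ is almost surely eventually contained in an added polytope, and (iii) take the countable union of the resulting null exceptional events.

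Step 1 (nominal polytopic tube covering $\bar B_j$). Pick $\bar B_j = \bar B_{r_j}(x^\star_j) \subset \interior(\mathbb{X}_{\text{initial}})$ with $r_j$ small enough that a common mode sequence $i(\tau)$ carries every point in $\bar B_j$ into $\mathbb{X}_{\text{Goal}} = \mathbb{X}_{v_0}$ under some feasible open-loop plan; this is possible by continuity of the PWA dynamics on the interior of the mode partition. I would then construct a candidate solution of \eqref{volume}: choose a nominal trajectory from $x^\star_j$, and parameterize as in Sec.~\ref{sec_compute} with $\bar x_\tau = x^\star_\tau$, $\bar u_\tau = u^\star_\tau$, $G_0$ upper-triangular with diagonal entries $\ge r_j$, and $G_{\tau+1} = A_{i(\tau)}G_\tau + B_{i(\tau)}\theta_\tau$ for small $\theta_\tau$. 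For $r_j$ small enough, the big-$M$ constraints \eqref{eq_evolve_mip} are met with $\delta^{i(\tau)}_\tau = 1$, the subset-maintenance condition $\mathbb{X}_\tau \times \mathbb{U}_\tau \subseteq \mathbb{H}_{i(\tau)}$ holds with strict slack (and therefore admits strictly positive Farkas multipliers in \eqref{eq_farkas}), and $\mathbb{X}_{T^\star} \subseteq \mathbb{X}_{v_0} \subseteq \mathbb{X}_{\tree}^{K-1}$. This feasible candidate has $\mathbb{X}_0 \supseteq \bar B_j$.

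Step 2 (uniform lower bound and Borel--Cantelli). Hence the MICP on line 3 of Algorithm \ref{alg_vanilla} is feasible whenever $\bar x_0$ lies near $x^\star_j$. Combining the positive-diagonal lower bound on $G_0$ imposed in Sec.~\ref{sec_volume} with the default acceptance test $\vol(\mathbb{X}_0) > 0$, every accepted $\mathbb{X}_0$ contains a parallelotope of minimum size around $\bar x_0$. Because \sample{} has total support on $\mathbb{X} \setminus \mathbb{X}_{\tree}^{K-1}$ (in which $\bar B_j$ retains positive measure as long as it is uncovered), and because the random shift $\eta_k \in \beta {\eta_{\max}}_k[-1,1]$ with $\beta \sim U[0,1]$ has strictly positive density near zero, the per-iteration probability $p_j$ that $\bar x_0$ lands in $B_{r_j/2}(x^\star_j)$ and the returned $\mathbb{X}_0$ covers $\bar B_j$ is bounded below by a strictly positive constant, uniformly in $K$. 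Iterations are mutually independent, so $\sum_K p_j = \infty$, and the second Borel--Cantelli lemma yields $\bar B_j \subseteq \mathbb{X}_{\tree}^K$ eventually, almost surely. A countable union of null events is null, so $\interior(\mathbb{X}_{\text{initial}}) \subseteq \bigcup_j \bar B_j \subseteq \bigcup_K \mathbb{X}_{\tree}^K$ almost surely. The boundary of $\mathbb{X}_{\text{initial}}$ has empty interior and is absorbed by standard closure arguments since each $\mathbb{X}_{\tree}^K$ is closed.

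The main obstacle is ensuring that the MICP optimizer returns a polytope \emph{containing} $\bar B_j$ rather than merely some different positive-volume feasible solution, since $\alpha(\zeta)$ is only a heuristic surrogate for maximum volume. The resolution sketched above leans on the explicit positive-diagonal lower bound of Sec.~\ref{sec_volume}, which guarantees a minimum ball around $\bar x_0$ inside every accepted $\mathbb{X}_0$; this turns coverage of $x^\star_j$ into the tractable geometric event ``$\bar x_0$ lands near $x^\star_j$''. A secondary subtlety is in the choice of $r_j$ in Step 1: for a common mode sequence to exist, $r_j$ may need to shrink as the nominal trajectory grazes a mode boundary, which forces the ball cover of Step 2 to be non-uniform; but the cover remains countable, and therefore the Borel--Cantelli conclusion is preserved.
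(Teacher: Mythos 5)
Your reduction to a countable ball cover plus Borel--Cantelli is an attractive and genuinely different route from the paper's, but the step you yourself flag as the main obstacle is not actually resolved, and this is a real gap. Constraining $G_0$ to be upper triangular with diagonal entries bounded below by a small positive number guarantees $\det G_0 > 0$, i.e.\ $\vol(\mathbb{X}_0)>0$, but it does \emph{not} guarantee that $\mathbb{X}_0=\{\bar x_0\}\oplus G_0[-1,1]^n$ contains a ball of uniform radius around $\bar x_0$: a sheared matrix such as $G_0=\left(\begin{smallmatrix}\epsilon & M\\ 0 & \epsilon\end{smallmatrix}\right)$ with $M$ large satisfies the diagonal constraint while its inscribed-ball radius tends to $0$. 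Since $\alpha(\zeta)$ is only a heuristic surrogate for volume, the optimizer may legitimately return such a polytope, so the event ``the accepted $\mathbb{X}_0$ contains $\bar B_j$'' is never shown to have probability bounded away from zero, and the Borel--Cantelli conclusion that $\bar B_j$ is eventually covered collapses. (A secondary issue in the same step: iterations are not independent, because the sampling domain $\mathbb{X}\setminus\mathbb{X}_{\tree}^{K-1}$ and hence the law of $x_{\sample}$ depend on the past; you would need the conditional second Borel--Cantelli lemma, for which your claimed uniform lower bound would suffice --- if it held.) The paper avoids this trap by proving only \emph{volume accretion} --- while positive volume remains uncovered, each iteration adds a positive-volume set with positive probability --- and never requires the solver to return a polytope containing a prescribed neighborhood.

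The second gap is that you cover only $\interior(\mathbb{X}_{\text{initial}})$. As the paper's proof stresses, for discrete-time PWA dynamics $\mathbb{X}_{\text{initial}}$ need not be full-dimensional or connected: it can contain isolated components of zero Lebesgue measure (lower-dimensional explicit-MPC cells). Such components are disjoint from $\interior(\mathbb{X}_{\text{initial}})$ and are not recovered by ``standard closure arguments,'' since an isolated flat piece can lie at positive distance from everything your ball cover reaches, and $\bigcup_K \mathbb{X}_{\tree}^K$ is only a countable union of closed sets. The paper devotes the final part of its proof to exactly this case, using the random offset $\eta$ to ``land'' on a $q$-dimensional partition and arguing (claim A) that a rank-$q$ parameterization $G_0=G^fG^q$ of maximal dimension can be found once the $\vol(\mathbb{X}_0)>0$ acceptance test is relaxed. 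Relatedly, your Step 1 balls must sit inside the interior of a single explicit-MPC cell for a common mode sequence to exist, so even the full-dimensional part of $\mathbb{X}_{\text{initial}}$ is covered by your argument only up to the cell boundaries, which you would still need to absorb.
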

\begin{proof}
see appendix
\end{proof}

\section{Control Synthesis}
\label{sec_control}

Once we obtain the tree, synthesizing the controller for Problem \ref{problem_feasible} is straightforward. In this section, we consider both the cases when the state is in $\mathbb{X}_{\tree}$ or outside of $\mathbb{X}_{\tree}$
 - for the latter we do not have formal guarantees that we can steer the state to $\mathbb{X}_{\text{goal}}$, but we provide heuristics. 

\subsubsection{$x \in \mathbb{X}_{\tree}$}
The following theorem provides the control policy. The proof follows from Sec. \ref{sec_compute} and the tree structure. 

\begin{theorem}
Let $\mu_{\mathbb{X}_{\tree}}: \mathbb{X}_{\tree} \rightarrow \mathbb{U}$ be
\begin{equation}
\label{eq_policy_tree}
\mu_{\mathbb{X}_{\tree}}(x)= \bar{u}_{v^*}+\theta_{v^*} p_{v^*}(x) 
\end{equation}
where $v^*= \argmin_{v \in V} \{ V(v) | x \in \mathbb{X}_v\}$. Then, given $x_0 \in \mathbb{X}_{\text{tree}}$,  implementing $\mu_{\tree}$ yields a trajectory $x_0,\cdots,x_T$, such that $x_\tau \in \mathbb{X}_{\text{tree}}, \tau=0,\cdots,T$,  $x_T \in \mathbb{X}_{\text{goal}}$, and the total cost $J$ in \eqref{eq_cost} is upper bounded by $V(v^*)$. 
\end{theorem}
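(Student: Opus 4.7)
The plan is to carry an induction on time $\tau$, using invariants built into the tree by Sec.~\ref{sec_compute} together with the recursive definition~\eqref{eq_value}. The three properties I aim to maintain along the closed-loop trajectory produced by $\mu_{\mathbb{X}_{\tree}}$ are: (a) $x_\tau \in \mathbb{X}_{\tree}$, so that $v^*_\tau$ is well-defined; (b) $V(v^*_{\tau+1}) \leq V(v^*_\tau) - \gamma_i(x_\tau, u_\tau)$ for the active mode $i$; and (c) the sequence $\{v^*_\tau\}$ reaches $v_0$ in finitely many steps.

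For (a), I would fix $v = v^*_\tau$ and invoke the constraints of~\eqref{volume}. Subset maintenance enforced through Lemma~\ref{lemma_farkas} guarantees $\mathbb{X}_v \times \mathbb{U}_v \subseteq \mathbb{H}_i$ for some mode $i$, so the PWA dynamics reduce to the single affine map $(A_i, B_i, c_i)$ on the entire polytope $\mathbb{X}_v$. Substituting the control $u_\tau = \bar{u}_v + \theta_v p_v(x_\tau)$ into~\eqref{eq_evolve} and using the parameterization $x_\tau = \bar{x}_v + G_v p_v(x_\tau)$ gives $x_{\tau+1} = \bar{x}_{\child(v)} + G_{\child(v)}\, p_v(x_\tau) \in \mathbb{X}_{\child(v)} \subseteq \mathbb{X}_{\tree}$, closing the invariant.

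For (b), the definition~\eqref{eq_W_convex} of $\mathcal{C}$ immediately bounds the realized stage cost by $\gamma_i(x_\tau, u_\tau) \leq \mathcal{C}(v^*_\tau)$, while the recursion~\eqref{eq_value} gives $V(\child(v^*_\tau)) = V(v^*_\tau) - \mathcal{C}(v^*_\tau)$. Because $x_{\tau+1} \in \mathbb{X}_{\child(v^*_\tau)}$ by (a), the node $\child(v^*_\tau)$ is a feasible candidate in the $\argmin$ defining $v^*_{\tau+1}$, so $V(v^*_{\tau+1}) \leq V(\child(v^*_\tau))$. Telescoping from $\tau = 0$ to $T-1$ then yields $\sum_{\tau=0}^{T-1} \gamma_i(x_\tau,u_\tau) \leq V(v^*_0) - V(v^*_T)$, which produces the claimed cost bound as soon as I have $v^*_T = v_0$ (so that $V(v^*_T)=0$).

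The delicate step is (c). The $\argmin$ may ``shortcut'' past $\child(v^*_\tau)$ to an unrelated node of smaller value, so the depth to $v_0$ need not decrease monotonically, and a naive depth-induction fails. The cleanest workaround is to note that the parent-chain $v^*_0, \child(v^*_0), \child(\child(v^*_0)), \ldots, v_0$ has finite length $d(v^*_0)$ (a tree property) and, by invariant (a), is realized by some admissible closed-loop execution of total cost at most $V(v^*_0)$; the $\argmin$-based policy is pointwise no worse in value, so it reaches a value-zero node in at most $d(v^*_0)$ steps. Under the standing assumption that $v_0$ is the unique node with $V=0$ (which holds, e.g., whenever $\gamma_i$ is strictly positive outside $\mathbb{X}_{\text{goal}}$), this forces $v^*_T = v_0$ and hence $x_T \in \mathbb{X}_{\text{goal}}$, completing the argument.
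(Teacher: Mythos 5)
The paper offers no written proof of this theorem (it says only that it ``follows from Sec.~\ref{sec_compute} and the tree structure''), so your proposal is supplying the missing argument. Your steps (a) and (b) are correct and are exactly the intended content: the subset-maintenance constraint pins a single mode $i$ on $\mathbb{X}_{v}\times\mathbb{U}_{v}$, the update \eqref{eq_evolve} maps $\bar{x}_v+G_v p$ to $\bar{x}_{\child(v)}+G_{\child(v)}p$ so $x_{\tau+1}\in\mathbb{X}_{\child(v^*_\tau)}\subseteq\mathbb{X}_{\tree}$, and the chain $\gamma_i(x_\tau,u_\tau)\le\mathcal{C}(v^*_\tau)=V(v^*_\tau)-V(\child(v^*_\tau))\le V(v^*_\tau)-V(v^*_{\tau+1})$ telescopes to the cost bound.

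The gap is in step (c). From ``the $\argmin$-based policy is pointwise no worse in value'' you infer that it reaches a value-zero node in at most $d(v^*_0)$ steps; that inference is invalid. The $\argmin$ can jump to a node of strictly smaller value but much larger depth: take a parent chain of length $2$ with stage costs $10$ each versus a disjoint chain of length $5$ with stage costs $1$ each; if $x_1$ lies in both $\mathbb{X}_{\child(v^*_0)}$ and the head of the long cheap chain, the policy switches to the cheap chain and needs more than $d(v^*_0)=2$ steps. Fortunately the step-count bound is unnecessary --- the theorem asserts only that some finite $T$ exists --- and your own invariant (b) already gives the clean termination argument: for $v^*_\tau\neq v_0$ one has $V(v^*_{\tau+1})\le V(v^*_\tau)-\mathcal{C}(v^*_\tau)\le V(v^*_\tau)-c_{\min}$ with $c_{\min}:=\min_{v\neq v_0}\mathcal{C}(v)>0$ (the tree is finite and the stage costs are strictly positive off the goal), and since $V\ge 0$ and $V(v)=0$ only for $v=v_0$, the policy must reach $v_0$, hence $\mathbb{X}_{\text{goal}}$, within $\lceil V(v^*_0)/c_{\min}\rceil$ steps. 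Your remark that some positivity assumption on $\gamma_i$ is genuinely needed is well taken --- the theorem statement omits it, and with zero stage costs the $\argmin$ policy could in principle cycle among equal-value nodes --- though all of the paper's examples use $\gamma_i\equiv 1$, where your argument (so repaired) goes through.
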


In words, the control policy \eqref{eq_policy_tree} finds the set of polytopes which the current state belongs to, selects the one with the least cost-to-go, and implements the control law to get to its child polytope. All operations required for implementing \eqref{eq_policy_tree} are basic matrix operations - orders of magnitude faster than solving MICPs. The polytope search routine can be further accelerated using binary search trees for online implementation of PWA control laws \cite{tondel2003evaluation}.

\subsubsection{$x \in \mathbb{X} \setminus \mathbb{X}_{tree}$}
In case $ x \not \in \mathbb{X}_{tree}$, we still want to feed the system with some control input - no matter if the state can be steered toward $\mathbb{X}_{\text{goal}}$ or not. We propose the following heuristic: use \eqref{eq_distance_alpha} to find the closest polytope $\mathbb{X}_{v^*}$, and apply control input that steers $x$ toward $\mathbb{X}_{\child{v^*}}$ as much as possible by solving the following convex program:
\begin{equation}
\label{eq_policy_out}
\begin{array}{lll}
\mu_{\mathbb{X}\setminus \mathbb{X}_{\tree}}(x) = &  \arg \min & \|\delta\|_l \\
& \text{s.t} & F(x,u) + \delta \in \mathbb{X}_{\child(v^*)}, \\
\end{array}
\end{equation}
where $l \in \{1,2,\infty\}$. Note that if optimal $\delta$ is zero, then we have succeeded in getting $x$ back into the tree. We do not have any formal guarantees for the policy driven by \eqref{eq_policy_out}, but we have examples of its successful implementations (see Sec. \ref{sec_case}), even when $x$ is quite far from $\mathbb{X}_{tree}$. More elaborate formulations of \eqref{eq_policy_out} are possible in the price of higher computational cost. One can consider multiple polytopes or steps to search a wider range of getting-to-tree possibilities. Note the contrast here with full-blown online hybrid MPC as regions/policies are at least partially computed in advance.   

\section{Examples}
\label{sec_case}

{\bf Software.} Python scripts are publicly available in \cite{link_github}. Instructions are included to guide the user to define its own problem and use our method, or reproduce the results here. Five examples (including three shown in this paper) are currently included. The high-level details of the examples are explained here and one may refer to \cite{link_github} for full details. The cost criteria in all examples is time ($c_i=1, i \in \mathcal{M}$). 

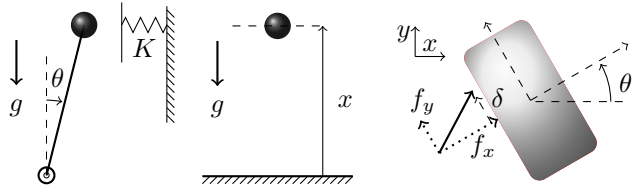
\begin{figure}[t]
\begin{tikzpicture}
\draw[thick] (0,0) -- (0.5,2);
\shade[ball color = black!99, opacity = 0.9] (0.5,2) circle (5pt);
\node[] at (0.15,1.3) {$\theta$};
\draw[thick,->] (-0.4, 1.8) -- (-0.4,1.2);
\node[] at (-0.4,0.9) {$g$};
\node[] at (1.3,1.7) {$K$};
\draw[] (1.6,0.7) -- (1.6,2.25);
\draw[] (1,1.5) -- (1,2.3);
\draw[] (1,2) -- (1.05,2.1) -- (1.15,1.9) -- (1.25,2.1) -- (1.35,1.9) -- (1.45,2.1) -- (1.55,1.9) -- (1.6,2);
\foreach \i in {1,...,15}{
\draw[] (1.6, 2.3-0.1*\i) -- (1.7,2.2-0.1*\i);}
\draw[line width=0.28mm,color=black] (0,0) circle (3pt);
\draw[line width=0.1mm] (0,0) circle (1pt);
\draw[->] (0,1) arc (90:78:1);
\draw[line width=0.1pt,dashed] (0,0) -- (0.0,1.6);
\end{tikzpicture}
~
\centering
\begin{tikzpicture}
\shade[ball color = black!99, opacity = 0.9] (0,2) circle (5pt);
\draw[thick] (-1, 0) -- (1,0);
\draw[] (-1, -0.1) -- (-0.9,0);
\foreach \i in {1,...,19}{
\draw[] (-1+\i*0.1, -0.1) -- (-0.9+\i*0.1,0);}
\draw[->] (0.6, 0) -- (0.6,2);
\draw[thick,->] (-0.8, 1.8) -- (-0.8,1.2);
\draw[dashed,] (-0.6, 2) -- (0.6,2) ;
\node[] at (0.9,1) {$x$};
\node[] at (-0.8,0.9) {$g$};
\end{tikzpicture}
~
\begin{tikzpicture}
\shade[ball color = black!10, rounded corners,fill=red,rotate around={30:(0,0)}] (0,0) rectangle (1,2);
\draw[dashed,->,rotate around={30:(0,0)}] (0.5, 1) -- (0.5,2.2);
\draw[dashed,->,rotate around={30:(0,0)}] (0.5, 1) -- (2,1);

\draw[thick,dotted,->,rotate around={30:(0,0)}] (-0.9,1) -- (0,1);
\draw[thick,dotted,->,rotate around={30:(0,0)}] (-0.9,1) -- (-0.9,1.5);
\draw[thick,->,rotate around={30:(0,0)}] (-0.9,1) -- (-0.1,1.5);
\draw[dashed,->,rotate around={30:(0,0)}] (-0.1,1) -- (-0.1,1.4);

\draw[dashed,thin] (0, 1.1) -- (1.2,1.1);

\draw[->] (1,1.1) arc (0:30:1);
\node at (1.2,1.35) {$\theta$};
\node at (-0.7,0.5) {$f_x$};
\node at (-1.5,1.1) {$f_y$};
\node at (-0.5,1.15) {$\delta$};

\draw[thin,->] (-1.6,1.7) -- (-1.6,2.1);
\draw[thin,->] (-1.6,1.7) -- (-1.2,1.7);

\node at (-1.4,1.85) {$x$};
\node at (-1.75,2) {$y$};

\end{tikzpicture}
\caption{Examples: {\bf Left:} Inverted Pendulum with Wall {\bf Middle}: Bouncing Ball {\bf Right:} Planar Pushing.}
\label{example_tikz}
\end{figure} 




\begin{example}[Inverted Pendulum with Wall]
\label{example_pendulum}
We adopt 2D example 1 from \cite{marcucci2017approximate}. Consider an inverted pendulum hitting a vertical wall at $\theta=0.1$ as in Fig. \ref{example_tikz} [Left]. The contact model is characterized by linear spring $K=1000$. Time is discretized by $0.01$. The control input $u \in [-0.4,0.4] g$ corresponds to the torque applied to the pendulum. The goal is to steer the state $(\theta,\dot{\theta})$ to the origin - a singleton. The authors in \cite{marcucci2017approximate} precomputed a set of admissible initial states in contact-free mode, denoted by $\Omega$, which is a set of states that can be driven into the origin using a linear feedback law. Explicit hybrid MPC was used to steer other states into $\Omega$. Here, we deliberately ignore exploiting the fact that $\Omega$ can be easily pre-computed, and entirely rely on our method to find $\mathbb{X}_{\text{initial}}$. Ideally, our method should recover $\Omega$. The final tree after 60 iterations is shown in Fig. \ref{fig_pendulum_color}. In comparison to the result in \cite{marcucci2017approximate}, illustrated in Fig. \ref{fig_pendulum_color} [Left], not only we recover most of $\Omega$ (and a bit beyond, as richer PWA laws are considered), but we also find a fairly large set of initial conditions in the top right of the state-space that the method in \cite{marcucci2017approximate} did not find. The reason is that the MPC horizon was limited to $10$ in \cite{marcucci2017approximate}, whereas our method does not directly suffer from short horizons. 
The green-red color spectrum in Fig. \ref{fig_pendulum_color} [Right] correspond to cost-to-go values. 
 

\begin{figure}[t]
\centering
\includegraphics[height=0.25\textwidth]{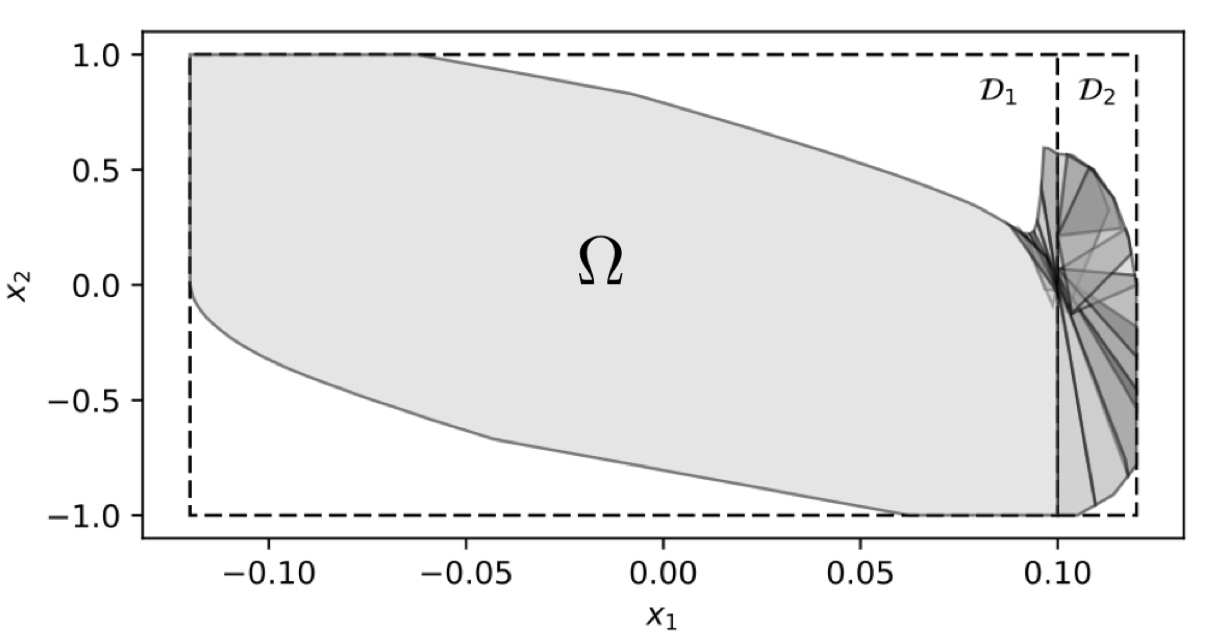}
\includegraphics[height=0.25\textwidth]{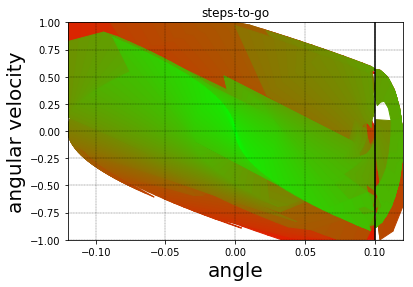}
\caption{Example 1: The results in {\bf Left:} \cite{marcucci2017approximate}, {\bf Right:} this paper.}
\label{fig_pendulum_color}
\end{figure}
\end{example}


\begin{example}[Bouncing Ball]
We consider the vertical motion of a ball falling under the gravity and ground impacts (see Fig. \ref{example_tikz} [Middle]). We have $\ddot{x}=-g+u$ with velocity sign change after hitting the ground, where $g=-9.8m/s^2$ and $u$ is the control input. We use $\Delta t=0.02s$ for time-discretization. We set $\mathbb{U}=\{u| \|u\| \le 3 m/s^2\}$, which indicates that the control actuation is not powerful enough to keep the ball from accelerating downward. The goal is to design a feedback strategy and find the set of admissible initial conditions such that the ball height and velocity reach $[1,1.2]m$ and $[-0.5,0.5]m/s$, respectively, while always satisfying the hard constraint that the velocity is within $[-5,5] m/s^2$. This problem is quite challenging because a lot of contacts may be necessary. The authors in \cite{ansari2016sequential} also consider at a similar problem, but the method is not correct-by-design as it involves heuristics. Similarities exist between this example and swinging up an inverted pendulum in \cite{tedrake2010lqr}. Both have nonlinear nature, and applying maximal control input in the direction of velocity increases energy. 

Using $T_{\max}=20$, Algorithm \ref{alg_vanilla} ``discovers" at the third iteration that by using impacts, more states can be driven into the goal. By increasing the number of iterations, nearly all states close to the origin are covered as the number of bounces are increased. Note that finding polytopes becomes more difficult as the space shrinks and polytopes close to the switching surface become small. The final tree after 100 iterations and 1049 polytopes is shown in Fig. \ref{fig_ball}. A clear disadvantage of the method in this paper versus LQR-trees in \cite{tedrake2010lqr} is caused by the discrete-time nature of the problem that leads to larger number of iterations required to fill the state-space. If the empty space between connecting polytopes is filled in a continuous-time sense, larger trees can be formed more quickly. We leave formal investigation of this issues to our future work. We randomly selected 500 points in $\mathbb{X}$ and found $359$ of them are in $\mathbb{X}^{100}_{\tree}$. By checking feasibility of \eqref{eq_MPC} for different values of $T$, we found $416$ points are drivable to the goal in less than $80$ steps - the tree has covered nearly $86$ percent of them. However, when \eqref{eq_MPC} is performed with smaller horizons, the tree has much more coverage. For instance, for $T \le 20 (30)$, only $106 (311)$ of $359$ points in $\mathbb{X}_{\tree}$ lead to feasible \eqref{eq_MPC}. 
Two sample trajectories using control  policies in \eqref{eq_policy_tree} and \eqref{eq_policy_out} (implemented on a much lesser grown tree) are shown - note the success of the latter.

\begin{figure}[t]
\centering
\includegraphics[width=0.3\textwidth]{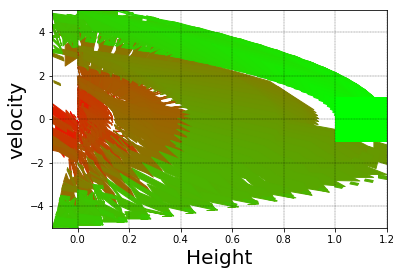}
\includegraphics[width=0.3\textwidth]{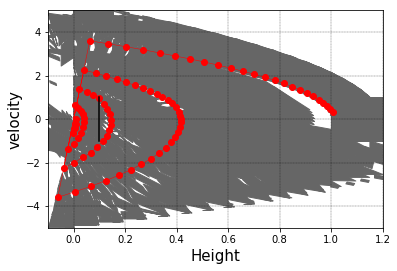}
\includegraphics[width=0.3\textwidth]{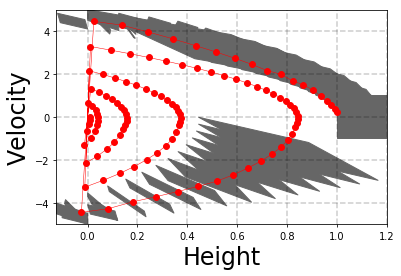}
\caption{Example 2: {\bf Left:} Cost-to-go. {\bf Middle:} $\mu_{\mathbb{X}_{\tree}}$. {\bf Right:} $\mu_{\mathbb{X} \setminus \mathbb{X}_{\tree}}$.}
\label{fig_ball}
\vspace{-0.15in}
\end{figure}

 \begin{figure}[t]
\centering
\includegraphics[width=0.32\textwidth]{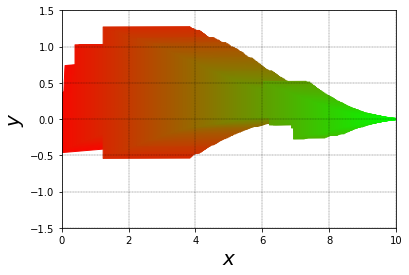}
\includegraphics[width=0.32\textwidth]{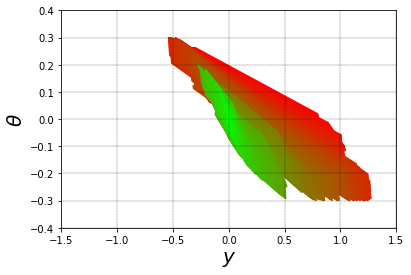}
\includegraphics[width=0.32\textwidth]{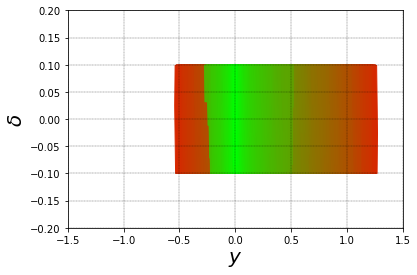}
\includegraphics[width=0.32\textwidth]{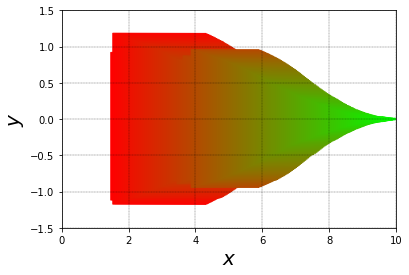}
\includegraphics[width=0.32\textwidth]{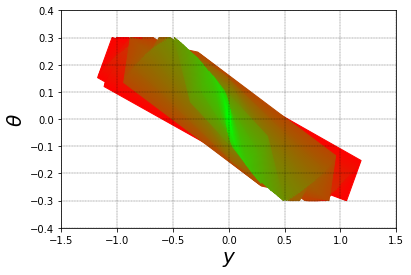}
\includegraphics[width=0.32\textwidth]{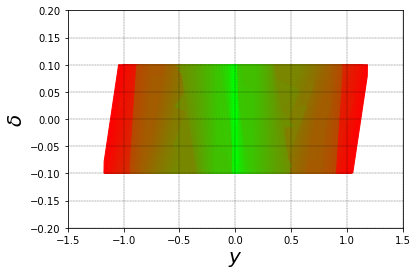}
\includegraphics[width=0.32\textwidth]{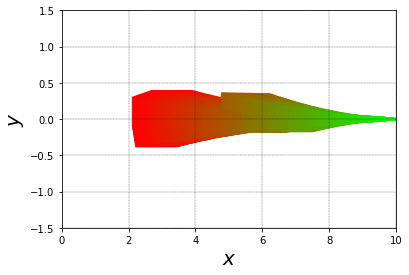}
\includegraphics[width=0.32\textwidth]{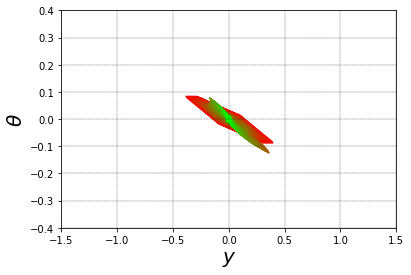}
\includegraphics[width=0.32\textwidth]{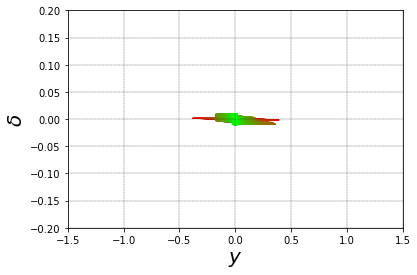}
\caption{Example 3: {\bf Top Row:} Tree projections for $T_{\max}=$ after 473 iterations (1159 polytopes). {\bf Middle Row:} Tree Projections for $T_{\max}=30$ after 16 iterations (251 polytopes). {\bf Bottom Row:} Tree projections for constrained linear model (only sticking) after 66 iterations (363 polytopes)}
\label{fig_push}
\vspace{-0.15in}
\end{figure}

 \begin{figure}[t]
\centering
\includegraphics[width=0.32\textwidth]{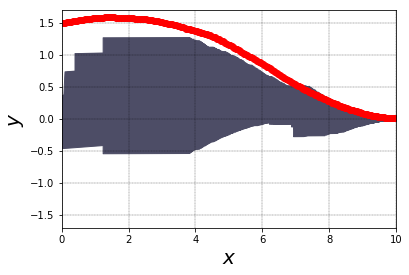}
\includegraphics[width=0.32\textwidth]{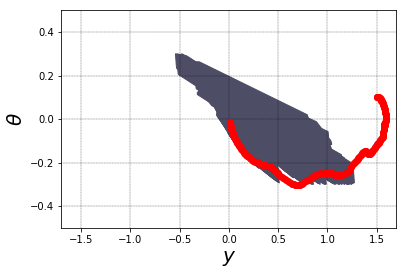}
\includegraphics[width=0.32\textwidth]{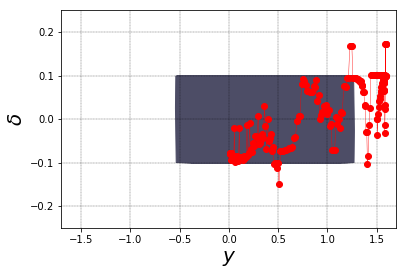}
\caption{Example 3: A sample trajectory for the controller corresponding to the first row in Fig. \ref{fig_push}}
\label{fig_push_run}
\vspace{-0.15in}
\end{figure}

\end{example}

\begin{example}[Planar Pushing] The problem and the model is adopted from \cite{hogan2017reactive}. The 6D state consists of $(x,y,\theta)$ for the box, and $(\delta,f_x,f_y)$ for the contact point and forces, as illustrated in Fig. \ref{example_tikz} [Right] - we have augmented the state with controls to construct the PWA dynamics and cells in the state-space. When the contact exists, there are three modes based on whether the contact point is fixed (pusher sticked), slides up, or slides down. The Coulomb friction coefficient between the pusher and the box is set to $\mu=0.02$, which means it is very hard to stick the pusher to the box.  The model is nonlinear, including bilinear terms for exerted wrench. We consider hybrid stabilization of a nominal trajectory and use the local PWA dynamics to compute the tree around the nominal trajectory, which is the horizontal line with $y=\theta=\delta=f_y=0, f_x=1$. We set $0.8 \le |f_x| \le 1.2, |\theta| \le 0.3, |\delta|\le 0.1,$ to approximately linearize the bilinear terms. The goal is a box of size $0.01$ around $x=10$ and other variables set to zero. 

First, we set $T_{\max}=5$ to obtain shorter branches, but more quickly. The results after 473 iterations and obtaining 1159 polytopes are illustrated in shown Fig. \ref{fig_push}. We have projected the tree on $x-y$, $y-\theta$, and $y-\delta$ planes. The asymmetry is due to the sampling nature of our solution - the tree was grown toward greater $y$'s and we expect more iterations are required to grow the tree in the other direction. A sample trajectory from $x=0,y=1.5,\theta=0.1,\delta=0,f_x=1,f_y=0$, deliberately selected from outside of the tree, is shown in Fig. \ref{fig_push_run}. It is observed that the controller manages to bring the state into the tree and remains thereafter until reaching the goal at $t=227$. As a basic numerical comparison, we found solving the hybrid MPC with horizon $T=100-200$ rates about 0.5-1.2Hz, whereas a rudimentary setup of our controller is much faster at 20-50Hz. Our solution compares to \cite{hogan2017reactive}, which uses machine learning to ``learn" the mode sequences. In contrast, our synthesis is formal and the solution is correct-by-design - at least as long as $x_0 \in \mathbb{X}_{\tree}$. 

Next, we set $T_{\max}=30$, which constructs a tree with longer branches, but the MILP for each branch is larger. The result after 16 iterations and 251 polytopes is shown in the bottom row in \ref{fig_push}. It is observed that the tree is quite symmetric this time. Finally, we expect a linear controller to perform poorly on this problem as it can not take into account mode switches. A tree for  the system constrained to remain in the sticking mode was computed and is shown in Fig. \ref{fig_push} for comparison.    
\end{example}

\section{Discussion and Future work}
We believe our method is a step toward formal design of fast hybrid feedback policies for multi-contact tasks such as robotic manipulation. An issue that we largely overlooked in this paper was biased sampling, which RRT methods are shown to greatly benefit from in high dimensions  \cite{elbanhawi2014sampling,janson2015fast}. With biased sampling toward pre-computed nominal trajectories, trees can be grown locally and connected to each other in an efficient way for complex manipulation problems.    

 

\bibliographystyle{IEEEtran}

\bibliography{ana_references}

\clearpage
\section*{Appendix}

\subsection*{Proof for Lemma 1.}
First, we prove the following result from basic convex analysis. 

\begin{lemma}
\label{prop_subset}
Given two polytopes $\mathbb{A}=\{x | H_A x \le h_A\} \subset \mathbb{R}^{n_A}, \mathbb{B}=\{x | H_B x \le h_B\} \subset \mathbb{R}^{n_B}$, and $T \in \mathbb{R}^{n_A \times n_B}$, $d \in \mathbb{R}^{n_A}$. Then $T\mathbb{B}\oplus\{d\} \subseteq \mathbb{A}$ is equivalent to:
$$\exists \Lambda \ge 0~ \st \Lambda H_B= H_A T, \Lambda h_B \le h_A - H_A d.$$
\end{lemma}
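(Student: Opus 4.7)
The plan is to derive this inclusion characterization directly from LP duality (equivalently, Farkas' Lemma) applied row-by-row. I would first rewrite the condition $T\mathbb{B}\oplus\{d\}\subseteq\mathbb{A}$ in its quantifier form: $\forall y \in \mathbb{R}^{n_B},\ H_B y \le h_B \Rightarrow H_A(Ty+d) \le h_A$, and observe that it decomposes into $m_A$ independent scalar implications---one per row of $H_A$---namely $H_B y \le h_B \Rightarrow (H_A T)_i y \le (h_A - H_A d)_i$.

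The easy direction ($\Leftarrow$) is a one-line substitution: given $\Lambda \ge 0$ with $\Lambda H_B = H_A T$ and $\Lambda h_B \le h_A - H_A d$, then for any $y \in \mathbb{B}$,
\[
H_A T y \;=\; \Lambda H_B y \;\le\; \Lambda h_B \;\le\; h_A - H_A d,
\]
which is exactly $H_A(Ty+d) \le h_A$, i.e.\ $Ty+d \in \mathbb{A}$.

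For the forward direction ($\Rightarrow$), for each row $i$ I would form the LP $\max\{(H_A T)_i y \mid H_B y \le h_B\}$. The hypothesis says its optimum is at most $(h_A - H_A d)_i$, and since $\mathbb{B}$ is a non-empty bounded polytope the primal is feasible and bounded. Strong LP duality then yields a dual optimal $\lambda_i \ge 0$ with $\lambda_i^\top H_B = (H_A T)_i$ and $\lambda_i^\top h_B \le (h_A - H_A d)_i$. Stacking the vectors $\lambda_i^\top$ as the rows of $\Lambda$ produces the required certificate.

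I do not anticipate a substantive obstacle here: the result is essentially a matrix-form restatement of Farkas' Lemma, in line with the reference to \cite{rakovic2007optimized} in the paper. The only bookkeeping points are pathological cases (e.g.\ empty $\mathbb{B}$, or rows for which $(H_A T)_i = 0$), which are handled by invoking Farkas' Lemma directly rather than strong LP duality; these do not affect the structure of the argument.
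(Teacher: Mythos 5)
Your proposal is correct and follows essentially the same route as the paper's proof: decompose the inclusion into one linear program per row of $H_A$, apply strong LP duality to each, and stack the dual multipliers into $\Lambda$. You additionally spell out the easy ($\Leftarrow$) direction and the degenerate cases, which the paper leaves implicit, but the substance of the argument is identical.
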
 
\begin{proof}
We obtain the second relation from the first. The first condition is equivalent to $n_A$ linear program:
\begin{equation}
\label{eq_primal}
\begin{array}{ll}
\underset{x}\max & e_i' H_A (Tx + d), \\
\st & H_Bx \le h_B,
\end{array}
\le e'_i h_A, i=1,\cdots,n_A
\end{equation}
where $e_i$ is the unit vector in the $i$'th direction. By reformulating and writing the dual of each linear program in \eqref{eq_primal}, we use strong duality to have:
\begin{equation}
\label{eq_dual}
\begin{array}{ll}
\underset{z}\min & h_B' z, \\
\st & H_B' z = T'H'_A e_i,\\
& z\ge 0,
\end{array}
\le e_i h_A-e_i'H_AT, i=1,\cdots,n_A
\end{equation}
which it means $\exists z_i, i=1,\cdots,n_A$, such that
\begin{equation}
\label{eq_final}
z_i \ge 0, z'_i h_B \le h'_A e_i-T H_A e_i, z_i' H_B = e_i' H_A T, i=1,\cdots,n_A.
\end{equation}
Now define $\Lambda:=(z_1,\cdots,z_n)'$ and the proof immediately follows. 
\end{proof}

\subsubsection*{Proof for Lemma 1. (cont'd)}
Let $\delta_i=1, \delta_j=0, j=1,\cdots,N, j\neq i$. Then we have $Q_i\mathbb{Y} + q_i \subseteq \mathbb{Z}_i$ form Proposition \ref{prop_subset}. Moreover, the only feasible solution to $\Lambda_j h_q \le - H_{z,j} q_j$ is $\Lambda_j=0, q_j=0$, so $Q_j=0$ - otherwise, $Q_j \mathbb{Y}+q_i \subseteq \{0\}$ for non-zero $q_i$ and $Q_i$, which is impossible given that $\mathcal{Q}$ is not a singleton. Thus, $d=d_i, T=T_i$.

Moreover, we prove that if $\delta_i \in [0,1]$, then we have $T\mathcal{Q} \oplus\{d\} \subseteq \convexhull(\bigcup_{i=1}^N \mathcal{P}_i)$.  it follows from Proposition \ref{prop_subset} that:
$$
Q_i \mathbb{Y} \oplus\{q_i\} \subseteq \delta_i \mathbb{Z}_i.
$$ 
We take the Minkowski sum of the two sides to arrive at:
$$
\bigoplus_{i=1}^N Q_i \mathbb{Y} \oplus \{\sum_{i=1}^N q_i\} \subseteq \bigoplus_{i=1}^N \delta_i \mathbb{Z}_i,
$$
where the right hand side is equal to $\convexhull(\bigcup_{i=1}^N \mathbb{Z}_i)$. Furthermore, we have
$$
(\sum_{i=1}^N Q_i) \mathbb{Y}  \subseteq \bigoplus_{i=1}^N Q_i \mathbb{Y},
$$
which indicates $Q\mathbb{Y} \oplus\{d\} \subseteq \convexhull(\bigcup_{i=1}^N \mathbb{Z}_i)$, and the proof is complete.

\subsection*{Proof for Theorem 1.}

\begin{proof}
First, observe that $\mathbb{X}_{\tree}^{K}$ monotonically grows: $\mathbb{X}_{\tree}^{K} \subseteq \mathbb{X}_{\tree}^{K+1}$, and is upper bounded by $\mathbb{X}$, so the limit exists. Also note that all sets are closed as Problem \ref{problem_feasible} is formulated with all sets being compact, and the same holds for Subproblem \ref{subproblem} as all inequalities are non-strict in MICP formulation.   

We recall the results of explicit hybrid MPC. We know that given a fixed $T$, the explicit solution to \eqref{eq_MPC} produces a finite number of polyhedral partitions with affine feedback law in each one \cite{bemporad2000piecewise}. By varying $T$, the whole $\mathbb{X}_{\text{initial}}$ can be filled with polytopes with affine feedback law in each one. However, there is no claim about the volumes of these polyhedral partitions - some may be less than $n$-dimensional. In fact, due to hybrid dynamics and discrete-time nature, there is no guarantee that $\mathbb{X}_{\text{initial}}$ is even simply connected. It may consist of disconnected regions of zero Lebesgue measure, for which we do not have means to cover using sampling-based approaches. The key to overcome this issue lies in moving away from $x_{\sample}$ to search for feasibility, as highlighted in line 5: in Algorithm \ref{alg_vanilla}. 

First, we prove that we are able to cover the ``full-dimensional neighborhoods'' of $\mathbb{X}_{\text{initial}}$, or equivalently prove the  following property:
\begin{equation}
\label{eq_theorem_volume}
\lim_{K \rightarrow \infty} \vol(\mathbb{X}_{\text{initial}} \setminus \mathbb{X}_{\tree}^{K}) = 0.
\end{equation}
We verify \eqref{eq_theorem_volume} by showing that when $\vol(\mathbb{X}_{\text{initial}} \setminus \mathbb{X}_{\tree}^{K}) \neq 0$, then $\mathbb{X}_{\tree}^{K+1} \setminus \mathbb{X}_{\tree}^{K}$ has non-zero volume with non-zero probability. 
First, given $T_{\max}$, let $\mathbb{X}_{\text{initial}}^{T_{\max}}$ be the set of all states that can be driven into $\mathbb{X}_{\text{goal}}$ within $T_{\max}$ steps. First, we prove that \eqref{eq_coverage} holds for $\mathbb{X}_{\text{initial}}^{T_{\max}} \setminus X^K_{\tree}$. Then by induction, we prove the argument for any multiplies of $T_{\max}$, and thus $\mathbb{X}_{\text{initial}} \setminus X^K_{\tree}$. 

When $\vol(\mathbb{X}_{\text{initial}}^{T_{\max}} \setminus \mathbb{X}_{\tree}^{K}) \neq 0$, then there is a non-zero probability that $x_{\sample}$ is selected from $\mathbb{X}_{\text{initial}}^{T_{\max}} \setminus \mathbb{X}_{\tree}^{K}$. Let $T$ be the number of steps that is required to steer $x_{\sample}$ into $\mathbb{X}_{\text{goal}}$. Then by non-zero probability, $T \in \{1,\cdots,T_{\max} \}$ is chosen in Line 4:. Now we need to prove that the solution to Subproblem \ref{subproblem} with $\mathbb{X}_{\text{target}}=\mathbb{X}_{\text{goal}}$ returns a sequence of polytopes such that $\mathbb{X}_0$ has non-zero volume with non-zero probability. This fact follows from polyhedral partition of explicit hybrid MPC. We know that the explicit solution to \eqref{eq_MPC} with horizon $T$ produces a finite number of polyhedral partitions with affine feedback law in each one. Therefore, $x_{\sample}$ with probability $1$ belongs to the interior of one of the polyhedral partitions with non-zero volume. Let it be denoted by $\mathbb{P}_{\sample}$. Thus, the solution to Subproblem \ref{subproblem} is guaranteed to produce a non-zero volume $\mathbb{X}_0 \subseteq \mathbb{P}_{\sample}$ with  positive volume constraint- recall the upper/lower triangular restriction for $G_0$ mentioned in Sec. \ref{sec_volume}. 

For the case $\mathbb{X}_{\text{initial}} \setminus \mathbb{X}_{\tree}^{K}$ we replace $\mathbb{X}_{\text{goal}}$ by polytopes in $\mathbb{X}_{\text{initial}}^{T_{\max}}$ and we arrive a similar argument for $\mathbb{X}_{\text{initial}}^{\kappa T_{\max}}, \kappa=1,2,\cdots$, and the rest of the proof follows. 

Now we prove that we cover polyhedral partitions that have less than $n$ dimensions by relaxing the requirement that $\vol(\mathbb{X}_0)>0$. Let $\mathbb{P}_{<n}$ be such a polytope with dimension $q<n$. There is non-zero probability that $x_{\sample}$ is chosen from $\eta$-neighborhood of $\mathbb{P}_{<n}$ such that no other partition is intersecting with this neighborhood. Therefore, when solving Subproblem \ref{subproblem}, there is non-zero probability that we "land" on $\mathbb{P}_{<n}$, and obtain a $q$-dimensional local polytope (claim A) that its intersection with $\mathbb{P}_{<n}$ has non-zero measure in $q$-dimensional local coordinates, which completes the proof. 

(claim A): Our solver for Subproblem \ref{subproblem} is able to find polytopes with maximal dimension. In other words, we are able to maximize the rank of $G_0$. We already have shown that we can obtain full-rank $G_0$ with upper/lower triangular restriction. When $G_0$ can not be full rank, we can re-parametrize it by $G_0=G^f G^q$, where $G^f$ is a parameterized full rank square matrix and $G_q$ is a given matrix with rank $q<n$ such that $G^f$ exists for $q$ but not for $q+1$ - the value of $q$ can be obtained by line search. 
\end{proof}

\begin{remark}
In comparison to \cite{tedrake2010lqr}, we have dropped two key assumptions leading to probabilistic feedback coverage. First, it was assumed in \cite{tedrake2010lqr} that nonlinear trajectory optimization is always able to find trajectories connecting to the tree, if any exists, with non-zero probability. Since our trajectory optimization method is based on MICP and is complete, we do not require this assumption. Second, it was assumed in \cite{tedrake2010lqr} that it is always possible to obtain funnels with non-zero volume around any nominal trajectory. This assumption is not reasonable in our setting. But we know from explicit MPC that any "funnel" with maximum dimension, if exists, can already be obtained by local affine control laws, which our controller is already based on.   
\end{remark}

\begin{remark}
We have not discussed about the rate of convergence in Theorem 1. In practice, achieving reasonable feedback coverage can be very challenging. Moreover, there is no straightforward guidance to when to switch to polytopes with dimension less than $n$. We note that such polytopes occur in many interesting problems in contact-based robotics. For example, inelastic contacts lead to dimension reduction in the state of the system.
\end{remark}
\end{document}